\theoremstyle{plain}
\newtheorem{thm}{Theorem}[section]
\newtheorem{lemma}{Lemma}[section]
\newtheorem{defn}{Definition}[section]
\newtheorem{remark}{Remark}[section]
\numberwithin{equation}{section}
\def\@setcopyright{}
\def\serieslogo@{}
\begin{document}

\title[A note on FTAP]{A note on the Fundamental Theorem of Asset Pricing under model uncertainty}\thanks{This research is supported by the National Science Foundation under grant  DMS-0955463.} 

\author{Erhan Bayraktar}
\address[Erhan Bayraktar]{Department of Mathematics, University of Michigan, 530 Church Street, Ann Arbor, MI 48104, USA}
\email{erhan@umich.edu}
\author{Yuchong Zhang }
\address[Yuchong Zhang]{Department of Mathematics, University of Michigan, 530 Church Street, Ann Arbor, MI 48104, USA}
\email{yuchong@umich.edu}
\author{Zhou Zhou }
\address[Zhou Zhou]{Department of Mathematics, University of Michigan, 530 Church Street, Ann Arbor, MI 48104, USA}
\email{zhouzhou@umich.edu}

\begin{abstract}
We show that the results of \cite{BN13} on the Fundamental Theorem of Asset Pricing and the super-hedging theorem can be extended to the case in which the options available for static hedging (\emph{hedging options}) are quoted with bid-ask spreads. In this set-up, we need to work with the notion of \emph{robust no-arbitrage} which turns out to be equivalent to no-arbitrage under the additional assumption that hedging options with non-zero spread are \emph{non-redundant}. A key result is the closedness of the set of attainable claims, which requires a new proof in our setting.
\end{abstract}
\keywords{Model uncertainty, bid-ask prices for options, semi-static hedging, non-dominated collection of probability measures, Fundamental Theorem of Asset Pricing, super-hedging, robust no-arbitrage, non-redundant options}

\maketitle

\section{introduction}
We consider a discrete time financial market in which stocks are traded dynamically and options are available for static hedging. We assume that the dynamically traded asset is liquid and trading in them does not incur transaction costs, but that the options are less liquid and their prices are quoted with a bid-ask spread. (The more difficult problem with transaction costs on a dynamically traded asset is analyzed in \cite{2013arXiv1309.1420B} and \cite{DS13}.) As in \cite{BN13} we do not assume that there is a single model describing the asset price behavior but rather a collection of models described by the convex collection $\mathcal{P}$ of probability measures, which does not necessarily admit a dominating measure. One should think of  $\mathcal{P}$ as being obtained from calibration to the market data. We have a collection rather than a single model because generally we do not have point estimates but a confidence intervals for the parameters of our models. Our first goal is to obtain a criteria for deciding whether the collection of models represented by $\mathcal{P}$ is viable or not. Given that $\mathcal{P}$ is viable we would like to obtain the range of prices for other options written on the dynamically traded assets. The dual elements in these result are martingale measures that price the hedging options correctly (i.e. consistent with the quoted prices). As in classical transaction costs literature, we need to replace the no-arbitrage condition by the stronger \emph{robust no-arbitrage condition}, as we shall see in Section~\ref{sec:FTAPI}. In Section~\ref{sec:FTAPII} we will make the additional assumption that the hedging options with non-zero spread are \emph{non-redundant} (see Definition \ref{defn:non-redundancy}). We will see that under this assumption no-arbitrage and robust no-arbitrage are equivalent. Our main results are Theorems~\ref{thm:main} and \ref{cor}.

\section{Fundamental Theorem with Robust No Arbitrage}\label{sec:FTAPI}
Let $S_t=(S^1_t, \ldots, S^d_t)$ be the prices of $d$ traded stocks at time $t\in\{0, 1,\ldots, T\}$ and $\mathcal{H}$ be the set of all predictable $\mathbb{R}^d$-valued processes, which will serve as our trading strategies. Let $g=(g^1, \ldots, g^e)$ be the payoff of $e$ options that can be traded only at time zero with bid price $\underline{g}$ and ask price $\overline{g}$, with $\overline{g} \geq \underline{g}$ (the inequality holds component-wise).  We assume $S_t$ and $g$ are Borel measurable, and there are no transaction costs in the trading of stocks. 
\begin{defn}[No-arbitrage and robust no-arbitrage]\label{defn:NA} 
We say that condition NA($\mathcal{P}$) holds if for all $(H,h)\in\mathcal{H}\times\mathbb{R}^e$,
\[H\bullet S_T+h^+(g-\overline{g})-h^-(g-\underline{g})\geq 0 \quad \mathcal{P}-\text{quasi-surely}\,(\text{-q.s.})\footnote{A set is $\mathcal{P}$-polar if it is $P$-null for all $P\in\mathcal{P}$. A property is said to hold $\mathcal{P}$-q.s. if it holds outside a $\mathcal{P}$-polar set.}\]
implies
\[H\bullet S_T+h^+(g-\overline{g})-h^-(g-\underline{g})= 0 \quad \mathcal{P}\text{-q.s.,}\]
where $h^{\pm}$ are defined component-wise and are the usual positive/negative part of $h$.\footnote{When we multiply two vectors, we mean their inner product.} 

We say that condition $NA^r(\mathcal{P})$ holds if there exists $\underline{g}',\overline{g}'$ such that $[\underline{g}', \overline{g}']\subseteq ri[\underline{g}, \overline{g}]$ and NA($\mathcal{P}$) holds if $g$ has bid-ask prices $\underline{g}', \overline{g}'$.\footnote{``ri" stands for relative interior. $[\underline{g}', \overline{g}']\subseteq ri[\underline{g}, \overline{g}]$ means component-wise inclusion.}
\end{defn}
\begin{defn}[Super-hedging price]
For a given a random variable $f$, its super-hedging price is defined as
\[\pi(f):=\inf\{x\in\mathbb{R}: \exists\,(H,h)\in\mathcal{H}\times\mathbb{R}^e \text{ such that } x+H\bullet S_T+h^+(g-\overline{g})-h^-(g-\underline{g})\geq f \ \mathcal{P}\text{-q.s.}\}.\]
Any pair $(H,h)\in\mathcal{H}\times\mathbb{R}^e$ in the above definition is called a semi-static hedging strategy. 
\end{defn}
\begin{remark}
\label{bounds_for_bid_ask_prices}
\textbf{[1]}
Let $\hat{\pi}(g^i)$ and $\hat{\pi}(-g^i)$ be the super-hedging prices of $g^i$ and $-g^i$, where the hedging is done using stocks and options excluding $g^i$. NA$^r(\mathcal{P})$ implies either
\[-\hat{\pi}(-g^i)\leq \underline{g}^i=\overline{g}^i\leq \hat{\pi}(g^i)\]
or
\begin{equation}\label{eq:NAr}
-\hat{\pi}(-g^i)\leq (\overline{g}')^i < \overline{g}^i \quad \text{and} \quad \underline{g}^i<(\underline{g}')^i \leq\hat{\pi}(g^i)
\end{equation}
where $\underline{g}', \overline{g}'$ are the more favorable bid-ask prices in the definition of robust no-arbitrage. The reason for working with robust no-arbitrage is to be able to have the strictly inequalities in \eqref{eq:NAr} for options with non-zero spread, which turns out to be crucial in the proof of the closedness of the set of hedgeable claims in \eqref{eq3} (hence the existence of an optimal hedging strategy), as well as in the construction of a dual element (see \eqref{eq1}).

\textbf{[2]} Clearly NA$^r(\mathcal{P})$ implies NA$(\mathcal{P})$, but the converse is not true. For example, assume in the market there is no stock, and there are only two options: $g_1(\omega)=g_2(\omega)=\omega,\ \omega\in\Omega:=[0,1]$. Let $\mathcal{P}$ be the set of probability measures on $\Omega$, $\underline g_1=\overline g_1=1/2$, $\underline g_2=1/4$ and $\overline g_2=1/2$. Then NA$(\mathcal{P})$ holds while NA$^r(\mathcal{P})$ fails.
\end{remark}

For $b,a\in\mathbb{R}^e$, let
\[
\mathcal{Q}^{[b,a]}:=\{Q\lll\mathcal{P}: Q \text{ is a martingale measure and } E^Q[g]\in [b,a]\}
\]
where $Q\lll\mathcal{P}$ means $\exists P\in\mathcal{P}$ such that $Q\ll P$.\footnote{$E^Q[g]\in [b,a]$ means $E^Q[g^i]\in[b^i, a^i]$ for all $i=1,\ldots,e$.} Let $\mathcal{Q}^{[b,a]}_\varphi:=\{Q\in\mathcal{Q}: E^Q[\varphi]<\infty\}$.
When $[b,a]=[\underline{g},\overline{g}]$, we drop the superscript and simply write $\mathcal{Q} ,\mathcal{Q}_\varphi$. Also define
\[\mathcal{Q}^{s}:=\{Q\lll\mathcal{P}: Q \text{ is a martingale measure and } E^Q[g]\in ri[\underline{g},\overline{g}]\}\]
and $\mathcal{Q}^{s}_\varphi:=\{Q\in\mathcal{Q}^s: E^Q[\varphi]<\infty\}$.

\begin{thm}\label{thm:main}
Let $\varphi\geq 1$ be a random variable such that $|g^i|\leq \varphi$ $\forall i=1,\ldots,e$. The following statements hold:
\begin{itemize}
\item[(a)] (Fundamental Theorem of Asset Pricing): The following statements are equivalent
\begin{itemize}
\item[(i)] NA$^r(\mathcal{P})$ holds.
\item[(ii)] There exists $[\underline{g}', \overline{g}']\subseteq ri[\underline{g}, \overline{g}]$ such that $\forall P\in\mathcal{P}$, $\exists Q\in\mathcal{Q}^{[\underline{g}', \overline{g}']}_\varphi$ such that $P\ll Q$.
\end{itemize}
\item[(b)] (Super-hedging) Suppose NA$^r(\mathcal{P})$ holds. Let $f:\Omega\rightarrow \mathbb{R}$ be Borel measurable such that $|f|\leq\varphi$. The super-hedging price is given by
\begin{equation}\label{eq}
\pi(f)=\sup_{Q\in\mathcal{Q}^s_\varphi}E^Q[f]=\sup_{Q\in\mathcal{Q}_\varphi}E^Q[f]\in(-\infty,\infty],
\end{equation}
and there exists $(H,h)\in\mathcal{H}\times\mathbb{R}^e$ such that $\pi(f)+H\bullet S_T+h^+(g-\overline{g})-h^-(g-\underline{g})\geq f \ \mathcal{P}\text{-q.s.}$.
\end{itemize}
\end{thm}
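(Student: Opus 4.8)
The plan is to reduce to the results of \cite{BN13} for the market in which only the $d$ stocks are traded dynamically --- the frictionless FTAP, the super-hedging duality, and, above all, the closedness of the set of claims super-replicable from $0$ by stock trading --- and to add the $e$ hedging options on top of them. By the definition of $NA^r(\mathcal P)$ and Remark~\ref{bounds_for_bid_ask_prices}, fix at the outset bid-ask prices $\underline g',\overline g'$ with $[\underline g',\overline g']\subseteq ri[\underline g,\overline g]$ such that $NA^r(\mathcal P)$ holds also with respect to $(\underline g',\overline g')$ and the strict inequalities \eqref{eq:NAr} hold for every option with non-zero spread; this $(\underline g',\overline g')$ is used for all relative-interior bookkeeping in part (a).

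The heart of the matter is the closedness of the attainable set
$$\mathcal A:=\bigl\{f:\ f\le H\bullet S_T+h^+(g-\overline g)-h^-(g-\underline g)\ \ \mathcal P\text{-q.s. for some }(H,h)\in\mathcal H\times\mathbb R^e\bigr\}$$
in the sense needed for the duality (cf.\ \cite{BN13}). Write $G(h):=h^+(g-\overline g)-h^-(g-\underline g)$; as a function of $h$, $G$ is positively homogeneous, continuous and concave --- hence superadditive --- but \emph{not} linear, which is precisely why the closedness proof of \cite{BN13} does not carry over verbatim. Given $\Phi_n\to\Phi$ $\mathcal P$-q.s.\ with $\Phi_n\le H^n\bullet S_T+G(h^n)$, the crux is to show $(H^n,h^n)$ may be taken $\mathcal P$-q.s.\ bounded. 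If $|h^n|$ is unbounded, a normalization argument (carried out period by period for the stock positions, as in \cite{BN13}) produces in the limit a nonzero pair $(\widehat H,\eta)\in\mathcal H\times\mathbb R^e$ with $\widehat H\bullet S_T+G(\eta)\ge0$ $\mathcal P$-q.s.; since $NA^r(\mathcal P)$ implies $NA(\mathcal P)$ with respect to both $(\underline g,\overline g)$ and $(\underline g',\overline g')$, this forces $\widehat H\bullet S_T+G(\eta)=0$ $\mathcal P$-q.s.\ and, running the same estimate with the tighter spread and comparing the two null relations, the strict inclusion \eqref{eq:NAr} forces $\eta$ to vanish on every option with non-zero spread. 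Thus a blow-up direction involves only the stocks and the zero-spread options, on which $G$ is \emph{linear}; for such directions the null relation can be subtracted off without altering the hedged value, so the blow-up in these coordinates is removed exactly as in \cite{BN13}. Once the stock and zero-spread-option positions are bounded, the positions in the non-zero-spread options are bounded too --- otherwise one normalizes by them and reaches the same contradiction with \eqref{eq:NAr}. With $(H^n,h^n)$ bounded, a diagonal/randomized extraction as in \cite{BN13} yields a limit $(H,h)$ with $\Phi\le H\bullet S_T+G(h)$ $\mathcal P$-q.s., i.e.\ $\Phi\in\mathcal A$. \emph{I expect this blow-up analysis --- confining blow-up directions to the linearly-priced instruments by means of $NA^r(\mathcal P)$ --- to be the main obstacle.}

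Granting closedness, part (b) follows along now-standard lines. The bound $\pi(f)\ge\sup_{\mathcal Q_\varphi}E^Q[f]$ is immediate: if $x+H\bullet S_T+G(h)\ge f$ $\mathcal P$-q.s.\ and $Q\in\mathcal Q_\varphi$, then $H\bullet S$ is (after the usual localization) a $Q$-supermartingale bounded below by a $Q$-integrable random variable, so $E^Q[H\bullet S_T]\le0$, while $h^+(E^Q[g]-\overline g)-h^-(E^Q[g]-\underline g)\le0$ because $E^Q[g]\in[\underline g,\overline g]$; hence $x\ge E^Q[f]$. Conversely, if $x<\pi(f)$ then $f-x\notin\mathcal A$, and Hahn--Banach separation of the point $f-x$ from the closed convex set $\mathcal A$ in the $\varphi$-weighted duality of \cite{BN13} produces $Q\lll\mathcal P$ with $E^Q[\varphi]<\infty$; testing the separating functional against the stock strategies and against the directions $\pm e_i$ in $h$ shows that $Q$ is a martingale measure with $E^Q[g]\in[\underline g,\overline g]$, i.e.\ $Q\in\mathcal Q_\varphi$, and that $E^Q[f]\ge x$; closedness also furnishes the optimal hedge when $\pi(f)<\infty$. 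To replace $\mathcal Q_\varphi$ by $\mathcal Q^s_\varphi$, fix $Q_0\in\mathcal Q^{[\underline g',\overline g']}_\varphi$ (non-empty by part (a)) and use $(1-\varepsilon)Q+\varepsilon Q_0\in\mathcal Q^s_\varphi$, whose $f$-expectation converges to $E^Q[f]$ as $\varepsilon\downarrow0$.

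Finally, for part (a): (ii)$\Rightarrow$(i) is the estimate above run with the bid-ask prices $[\underline g',\overline g']$ of (ii) --- any $\mathcal P$-q.s.\ super-hedge of $0$ from $0$ has non-positive expectation under every $Q\in\mathcal Q^{[\underline g',\overline g']}_\varphi$ (now with strict gaps on the non-zero-spread coordinates) while being $\ge0$ $Q$-a.s., hence vanishes $Q$-a.s.; as every $P\in\mathcal P$ is dominated by such a $Q$, it vanishes $\mathcal P$-q.s., and repeating with a spread strictly between $(\underline g',\overline g')$ and $(\underline g,\overline g)$ gives $NA^r(\mathcal P)$. For (i)$\Rightarrow$(ii) one combines the separation of part (b) with the exhaustion argument of \cite{BN13}: for fixed $P\in\mathcal P$ and an event $A$ with $P(A)>0$, separating a point supported on $A$ from $\mathcal A$ produces an element of $\mathcal Q^{[\underline g',\overline g']}_\varphi$ charging $A$, and a maximal-support such $Q$ then satisfies $P\ll Q\in\mathcal Q^{[\underline g',\overline g']}_\varphi$, which is exactly (ii).
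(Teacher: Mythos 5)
Your Part 1 (closedness of the hedgeable set) contains the paper's key idea: under NA$^r(\mathcal{P})$ the strict inequalities \eqref{eq:NAr} rule out blow-up directions in the options with non-zero spread, and once a limiting direction is confined to the stocks and zero-spread options the problem is linear again. Your way of killing such a direction (writing the null relation under both the original and the tighter spread and noting their difference is a strictly positive constant) is a valid variant of the paper's argument, which instead solves the null relation \eqref{eq4} for one spread option $g^e$ and contradicts $\underline g^e<(\underline g')^e\le\hat\pi(g^e)$. Note, though, that the paper's reduction is cleaner than your plan: it absorbs the stock and zero-spread positions into $\mathcal{C}=\{H\bullet S_T+\alpha(u-\underline u)\}-\mathcal{L}^0_+$, already $\mathcal{P}$-q.s.\ closed by \cite[Theorem 2.2]{BN13}, so the only normalization needed is of the finite-dimensional spread positions $\beta^n$; there is no need to redo the period-by-period analysis for the stock strategies, which you leave as a sketch and flag as the main obstacle.

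The genuine gap is in your Part 2, i.e.\ (a)(i)$\Rightarrow$(ii) and the hard inequality $\pi(f)\le\sup_{\mathcal{Q}_\varphi}E^Q[f]$. You invoke a Hahn--Banach separation of $f-x$ from $\mathcal{C}_g$ ``in the $\varphi$-weighted duality of \cite{BN13}'' to produce a countably additive $Q\lll\mathcal{P}$, and then a maximal-support (Halmos--Savage/Kreps--Yan type) exhaustion to get $P\ll Q$. Neither tool is available here: $\mathcal{P}$ is non-dominated, so there is no locally convex space of ($\mathcal{P}$-q.s.\ equivalence classes of) claims in which $\mathcal{C}_g$ is closed and whose continuous functionals are measures dominated by some $P\in\mathcal{P}$, and the exhaustion argument for a maximal-support martingale measure likewise requires a dominating measure; no such functional-analytic duality appears in \cite{BN13}. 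What the paper actually does is prove (i)$\Rightarrow$(ii) and \eqref{eq} jointly by induction on the number of hedging options: adding one option $f=g^{e+1}$ with spread, the induction hypothesis (duality for the smaller market) together with \eqref{eq:NAr} yields \eqref{eq1}, hence measures $Q_+,Q_-\in\mathcal{Q}^s_\varphi$ pinning $E^{Q_\pm}[f]$ strictly inside the spread, which are convexly combined with the $Q_0$ dominating a given $P$ from the induction FTAP; and the duality step is a \emph{finite-dimensional} separation in $\mathbb{R}^2$ of $\overline{\{E^Q[(f,\xi)]:Q\in\mathcal{Q}_\varphi\}}$ from $[\underline f,\overline f]\times\{0\}$, contradicted through the subadditivity and positive homogeneity of the super-hedging price plus the induction hypothesis $\pi(yf+z\xi)=\sup_{Q\in\mathcal{Q}_\varphi}E^Q[yf+z\xi]$. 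Your passage from $\mathcal{Q}_\varphi$ to $\mathcal{Q}^s_\varphi$ by convex combination and the easy inequality $\pi(f)\ge\sup_{\mathcal{Q}_\varphi}E^Q[f]$ are fine, but without the induction-on-options mechanism (or an equivalent substitute valid without a dominating measure) your proof of the dual attainment and of (a)(i)$\Rightarrow$(ii) does not go through.
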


\begin{proof}
It is easy to show $(ii)$ in (a) implies that NA($\mathcal{P}$) holds for the market with bid-ask prices $\underline{g}', \overline{g}'$, Hence NA$^r(\mathcal{P})$ holds for the original market. The rest of our proof consists two parts as follows.\\
\textbf{Part 1: $\pi(f)>-\infty$ and the existence of an optimal hedging strategy in (b).} Once we show that the set
\begin{equation}\label{eq3}
\mathcal{C}_g:=\{H\bullet S_T+h^+(g-\overline{g})-h^-(g-\underline{g}):\ (H,h)\in\mathcal{H}\times\mathbb{R}^e\}-\mathcal{L}_+^0
\end{equation}
is $\mathcal{P}-q.s.$ closed (i.e., if $(W^n)_{n=1}^\infty\subset\mathcal{C}_g$ and $W^n\rightarrow W\ \mathcal{P}-q.s.$, then $W\in\mathcal{C}_g$), the argument used in the proof of \cite[Theorem 2.3]{BN13} would conclude the results in part 1. We will demonstrate the closedness of $\mathcal{C}_g$ in the rest of this part. 

Write $g=(u,v)$, where $u=(g^1,\dotso,g^r)$ consists of the hedging options without bid-ask spread, i.e, $\underline g^i=\overline g^i$ for $i=1,\dotso,r$, and $v=(g^{r+1},\dotso,g^e)$ consists of those with spread, i.e., $\underline g^i<\overline g^i$ for $i=r+1,\dotso,e$, for some $r\in\{0,\dotso,e\}$. Denote $\underline u:=(\underline g^1,\dotso,\underline g^r)$ and similarly for $\underline v$ and $\overline v$. Define 
$$\mathcal{C}:=\{H\bullet S_T+\alpha(u-\underline u):\ (H,\alpha)\in\mathcal{H}\times\mathbb{R}^r\}-\mathcal{L}_+^0.$$
Then $\mathcal{C}$ is $\mathcal{P}-q.s.$ closed by \cite[Theorem 2.2]{BN13}.

Let $W^n\rightarrow W\ \mathcal{P}-q.s.$ with
\begin{equation}\label{ee}
W^n=H^n\bullet S_T+\alpha^n(u-\underline{u})+(\beta^n)^+(v-\overline{v})-(\beta^n)^-(v-\underline{v})-U^n\in\mathcal{C}_g,
\end{equation}
where $(H^n,\alpha^n,\beta^n)\in\mathcal{H}\times\mathbb{R}^r\times\mathbb{R}^{e-r}$ and $U^n\in\mathcal{L}_+^0$. If $(\beta^n)_n$ is not bounded, then by passing to subsequence if necessary, we may assume that $0<||\beta^n||\rightarrow\infty$ and  rewrite \eqref{ee} as
$$\frac{H^n}{\beta^n}\bullet S_T+\frac{\alpha^n}{||\beta^n||}(u-\underline u)\geq\frac{W^n}{||\beta^n||}-\left(\frac{\beta^n}{||\beta^n||}\right)^+(v-\overline{v})+\left(\frac{\beta^n}{||\beta^n||}\right)^-(v-\underline{v})\in\mathcal{C},$$
where $||\cdot||$ represents the sup-norm. Since $\mathcal{C}$ is $\mathcal{P}-q.s.$ closed, the limit of the right hand side above is also in $\mathcal{C}$, i.e., there exists some $(H,\alpha)\in\mathcal{H}\times\mathbb{R}^r$, such that
$$H\bullet S_T+\alpha(u-\underline{u})\geq-\beta^+(v-\overline{v})+\beta^-(v-\underline{v}),\quad\mathcal{P}-a.s.,$$
where $\beta$ is the limit of $(\beta^n)_n$ along some subsequence with $||\beta||=1$. NA$(\mathcal{P})$ implies that
\begin{equation}\label{eq4}
H\bullet S_T+\alpha(u-\underline{u})+\beta^+(v-\overline{v})-\beta^-(v-\underline{v})=0,\quad\mathcal{P}-a.s..
\end{equation}
As $\beta=:(\beta_{r+1},\dotso,\beta_e)\neq 0$, we assume without loss of generality (w.l.o.g.) that $\beta_e\neq 0$. If $\beta_e<0$, then we have from \eqref{eq4} that
$$\underline g^e+\frac{H}{\beta_e^-}\bullet S_T+\frac{\alpha}{\beta_e^-}(u-\underline{u})+\sum_{i=r+1}^{e-1}\left[\frac{\beta_i^+}{\beta_e^-}(g^i-\overline{g}^i)-\frac{\beta_i^-}{\beta_e^-}(g^i-\underline{g}^i)\right]=g^e,\quad\mathcal{P}-a.s..$$
Therefore $\hat\pi(g^e)\leq\underline g_e$, which contradicts the robust no-arbitrage property (see \eqref{eq:NAr}) of $g^e$. Here $\hat\pi(g^e)$ is the super-hedging price of $g^e$ using $S$ and $g$ excluding $g^e$. Similarly we get a contradiction if $\beta_e>0$.

Thus $(\beta^n)_n$ is bounded, and has a limit $\beta\in\mathbb{R}^{e-r}$ along some subsequence $(n_k)_k$. Since by \eqref{ee}
$$H^n\bullet S_T+\alpha^n(u-\underline u)\geq W^n-(\beta^n)^+(v-\overline{v})+(\beta^n)^-(v-\underline{v})\in\mathcal{C},$$
the limit of the right hand side above along $(n_k)_k$, $W-\beta^+(v-\overline{v})+\beta^-(v-\underline{v})$, is also in $\mathcal{C}$ by its closedness, which implies $W\in\mathcal{C}_g$. \\
\textbf{Part 2: $(i)\Rightarrow (ii)$ in part (a) and \eqref{eq} in part (b).}  We will prove the results by an induction on the number of hedging options, as in \cite[Theorem 5.1]{BN13}. Suppose the results hold for the market with options $g^1, \ldots, g^e$. We now introduce an additional option $f\equiv g^{e+1}$ with $|f|\leq \varphi$, available at bid-ask prices $\underline{f}<\overline{f}$ at time zero. (When the bid and ask prices are the same for $f$, then the proof is identical to \cite{BN13}.) 

$(i)\implies (ii)$ in (a): Let $\pi(f)$ be the super-hedging price when stocks and $g^1, \ldots, g^e$ are available for trading. By NA$^r(\mathcal{P})$ and \eqref{eq} in part (b) of the induction hypothesis, we have
\begin{equation}\label{eq1}
\overline{f}>\overline{f}'\geq-\pi(-f)=\inf_{Q\in\mathcal{Q}^s_\varphi}E^Q[f] \quad\text{and}\quad \underline{f}<\underline{f}' \leq\pi(f)=\sup_{Q\in\mathcal{Q}^s_\varphi}E^Q[f]
\end{equation}
where $[\underline{f}',\overline{f}']\subseteq(\underline{f},\overline{f})$ comes from the definition of robust no-arbitrage. This implies that there exists $Q_+, Q_-\in\mathcal{Q}^s_\varphi$ such that $E^{Q_+}[f]>\underline{f}''$ and $E^{Q_-}[f]<\overline{f}''$ where $\underline{f}''=\frac{1}{2}(\underline{f}+\underline{f}')$, $\overline{f}''=\frac{1}{2}(\overline{f}+\overline{f}')$. By (a) of induction hypothesis, there exists $[b,a]\subseteq ri[\underline{g},\overline{g}]$ such that for any $P\in\mathcal{P}$, we can find $Q_0\in\mathcal{Q}^{[b,a]}_\varphi$ satisfying $P\ll Q_0\lll\mathcal{P}$. Define
\[\underline{g}'=\min(b,E^{Q_+}[g],E^{Q_-}[g]), \quad\text{and}\quad \overline{g}'=\max(a,E^{Q_+}[g],E^{Q_-}[g])\]
where the minimum and maximum are taken component-wise. We have $[b,a]\subseteq [\underline{g}',\overline{g}']\subseteq ri[\underline{g},\overline{g}]$ and $Q_+,Q_-\in \mathcal{Q}^{[\underline{g}',\overline{g}']}_\varphi$. 

Now, let $P\in\mathcal{P}$. (a) of induction hypothesis implies the existence of a $Q_0\in\mathcal{Q}^{[b,a]}_\varphi\subseteq\mathcal{Q}^{[\underline{g}',\overline{g}']}_\varphi$ satisfying $P\ll Q_0\lll\mathcal{P}$. Define 
\[Q:=\lambda_-Q_-+\lambda_0Q_0+\lambda_+Q_+.\]
Then $Q\in\mathcal{Q}^{[\underline{g}',\overline{g}']}_\varphi$ and $P\ll Q$. By choosing suitable weights $\lambda_-, \lambda_0,\lambda_+\in(0,1), \lambda_-+ \lambda_0+\lambda_+=1$, we can make $E^Q[f]\in [\underline{f}'',\overline{f}'']\subseteq ri[\underline{f},\overline{f}]$.

\eqref{eq} in (b): Let $\xi$ be a Borel measurable function such that $|\xi|\leq\varphi$. Write $\pi'(\xi)$ for its super-hedging price when stocks and $g^1, \ldots, g^e, f\equiv g^{e+1}$ are traded, $\mathcal{Q}'_\varphi:=\{Q\in \mathcal{Q}_\varphi: E^{Q}[f]\in [\underline{f},\overline{f}]\}$ and $\mathcal{Q}'^{s}_\varphi:=\{Q\in \mathcal{Q}^s_\varphi: E^{Q}[f]\in (\underline{f},\overline{f})\}$. We want to show 
\begin{equation}\label{goal}
\pi'(\xi)=\sup_{Q\in\mathcal{Q}'^{s}_\varphi} E^{Q}[\xi]=\sup_{Q\in\mathcal{Q}'_\varphi} E^{Q}[\xi].
\end{equation}
It is easy to see that
\begin{equation}\label{trivial_side}
\pi'(\xi)\geq\sup_{Q\in\mathcal{Q}'_\varphi} E^{Q}[\xi]\geq \sup_{Q\in\mathcal{Q}'^{s}_\varphi} E^{Q}[\xi]
\end{equation}
and we shall focus on the reverse inequalities. Let us assume first that $\xi$ is bounded from above, and thus $\pi'(\xi)<\infty$. By a translation we may assume $\pi'(\xi)=0$.

First, we show $\pi'(\xi)\leq \sup_{Q\in\mathcal{Q}'_\varphi} E^{Q}[\xi]$. 
It suffices to show the existence of a sequence $\{Q_n\}\subseteq\mathcal{Q}_\varphi$ such that $\lim_nE^{Q_n}[f]\in [\underline{f},\overline{f}]$ and $\lim_nE^{Q_n}[\xi]=\pi'(\xi)=0$. (See page 30 of \cite{BN13} for why this is sufficient.) 
In other words, we want to show that
\begin{equation}\notag
\overline{\{E^Q[(f,\xi)]: Q\in\mathcal{Q}_\varphi\}}\cap \left([\underline{f},\overline{f}]\times\{0\} \right)\neq\emptyset.
\end{equation}
Suppose the above intersection is empty. Then there exists a vector $(y,z)\in\mathbb{R}^2$ with $|(y,z)|=1$ that strictly separates the two closed, convex sets, i.e. there exists $b\in\mathbb{R}$ s.t.
\begin{equation}\label{eq:seperation}
\sup_{Q\in\mathcal{Q}_\varphi}E^Q[yf+z\xi]<b \ \text{ and }\ \inf_{a\in[\underline{f},\overline{f}]}ya>b.
\end{equation}
It follows that
\begin{equation}\label{1}
y^+\underline{f}-y^-\overline{f}+\pi'(z\xi)\leq\pi'(yf+z\xi)\leq \pi(yf+z\xi)=\sup_{Q\in\mathcal{Q}_\varphi}E^Q[yf+z\xi]< b< y^+\underline{f}-y^-\overline{f},
\end{equation}
where the first inequality is because one can super-replicate $z\xi=(yf+z\xi)+(-yf)$ from initial capital $\pi'(yf+z\xi)-y^+\underline{f}+y^-\overline{f}$, the second inequality is due to the fact that having more options to hedge reduces hedging cost, and the middle equality is by (b) of induction hypothesis. The last two inequalities are due to \eqref{eq:seperation}.

It follows from \eqref{1} that $\pi'(z\xi)<0$. Therefore, we must have that $z<0$, otherwise $\pi'(z\xi)=z\pi'(\xi)=0$ (since the super-hedging price is positively homogenous). Recall that we have proved in part (a) that $\mathcal{Q}'_\varphi\neq\emptyset$. Let $Q'\in\mathcal{Q}'_\varphi\subseteq\mathcal{Q}_\varphi$. 
The part of \eqref{1} after the equality implies that $yE^{Q'}[f]+zE^{Q'}[\xi]<y^+\underline{f}-y^-\overline{f}$. Since $E^{Q'}[f]\in[\underline{f},\overline{f}]$, we get $zE^{Q'}[\xi]<y^+(\underline{f}-E^{Q'}[f])-y^-(\overline{f}-E^{Q'}[f])\leq 0$. Since $z<0$, $E^{Q'}[\xi]>0$. But by \eqref{trivial_side}, $E^{Q'}[\xi]\leq \pi'(\xi)=0$, which is a contradiction. 

Next, we show $\sup_{Q\in\mathcal{Q}'_\varphi} E^{Q}[\xi]\leq \sup_{Q\in\mathcal{Q}'^{s}_\varphi} E^{Q}[\xi]$. It suffices to show for any $\varepsilon>0$ and every $Q\in\mathcal{Q}'_\varphi$, we can find $Q^s\in \mathcal{Q}'^{s}_\varphi$ such that $E^{Q^s}[\xi]>E^Q[\xi]-\varepsilon$. To this end, let $Q'\in\mathcal{Q}'^{s}_\varphi$ which is nonempty by part (a).
Define
\[Q^s:=(1-\lambda)Q+\lambda Q'.\]
We have $Q^s\lll\mathcal{P}$ by the convexity of $\mathcal{P}$, and $Q^s\in\mathcal{Q}'^{s}_\varphi$ if $\lambda\in(0,1]$. Moreover,
\[E^{Q^s}[\xi]=(1-\lambda)E^Q[\xi]+\lambda E^{Q'}[\xi]\rightarrow E^Q[\xi] \ \text{ as } \ \lambda\rightarrow 0.\]
So for $\lambda>0$ sufficiently close to zero, the $Q^s$ constructed above satisfies $E^{Q^s}[\xi]>E^Q[\xi]-\varepsilon$. Hence we have shown that the supremum over $\mathcal{Q}'_\varphi$ and $\mathcal{Q}'^{s}_\varphi$ are equal. This finishes the proof for upper bounded $\xi$.

Finally when $\xi$ is not bounded from above, we can apply the previous result to $\xi\wedge n$, and then let $n\rightarrow\infty$ and use the closedness of $\mathcal{C}_g$ in \eqref{eq3} to show that \eqref{eq} holds. The argument would be the same as the last paragraph in the proof of \cite[Thoerem 3.4]{BN13} and we omit it here.
\end{proof}

\section{A Sharper Fundamental Theorem with the non-redundancy assumption}\label{sec:FTAPII}

We now introduce the concept of non-redundancy. With this additional assumption we will sharpen our result.

\begin{defn}[Non-redundancy]\label{defn:non-redundancy}
A hedging option $g^i$ is said to be non-redundant if it is not perfectly replicable by stocks and other hedging options, i.e.
there does not exist $x\in\mathbb{R}$ and a semi-static hedging strategy $(H,h)\in\mathcal{H}\times\mathbb{R}^e$ such that
\[x+H\bullet S_T+\sum_{j\neq i}h^jg^j= g^i \ \mathcal{P}\text{-q.s.}.\]
\end{defn}
\begin{remark}
NA$^r(\mathcal{P})$ does not imply non-redundancy. For Instance, having only two identical options in the market whose payoffs are in $[c,d]$, with identical bid-ask prices $b$ and $a$ satisfying $b<c$ and $a>d$, would give a trivial counter example where NA$^r(\mathcal{P})$ holds yet we have redundancy. 
\end{remark}

\begin{lemma}\label{le1}
Suppose all hedging options with non-zero spread are non-redundant. Then NA$(\mathcal{P})$ implies NA$^r(\mathcal{P})$. 
\end{lemma}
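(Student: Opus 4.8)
The plan is to argue by contradiction: assume NA$(\mathcal{P})$ holds but NA$^r(\mathcal{P})$ fails, and derive a violation of the non-redundancy of the spread options. As in the proof of Theorem~\ref{thm:main} write $g=(u,v)$, where $u$ collects the spread-free options and $v=(g^{r+1},\dots,g^e)$ those with $\underline g^i<\overline g^i$. For small $\delta>0$ let $\underline g^\delta,\overline g^\delta$ be the bid-ask prices obtained by keeping the prices of $u$ and shrinking each $[\underline g^i,\overline g^i]$, $i>r$, to $[\underline g^i+\delta,\overline g^i-\delta]$; then $[\underline g^\delta,\overline g^\delta]\subseteq ri[\underline g,\overline g]$, so the failure of NA$^r(\mathcal{P})$ yields, for every such $\delta$, a strategy $(H^\delta,h^\delta)\in\mathcal{H}\times\mathbb{R}^e$ violating NA for the $\delta$-market. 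The key elementary fact is that the payoff of any $(H,h)$ under $\underline g^\delta,\overline g^\delta$ equals its payoff under $\underline g,\overline g$ plus $\delta\sum_{i>r}|h^i|$; hence, if the spread-positions $h^\delta_i$ ($i>r$) all vanish, $(H^\delta,h^\delta)$ already violates NA for the original market, a contradiction. So $c_\delta:=\sum_{i>r}|h^\delta_i|>0$.

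Using positive homogeneity of the payoff, replace $(H^\delta,h^\delta)$ by $(H^\delta,h^\delta)/c_\delta$, so that $\sum_{i>r}|h^\delta_i|=1$; write $h^\delta=(\alpha^\delta,\beta^\delta)$ with $\alpha^\delta\in\mathbb{R}^r$ and $\beta^\delta\in\mathbb{R}^{e-r}$, $\|\beta^\delta\|_1=1$. Rearranging the inequality expressing that $(H^\delta,h^\delta)$ is an arbitrage for the $\delta$-market,
\[
H^\delta\bullet S_T+\alpha^\delta(u-\underline u)\ \geq\ -\sum_{i>r}\Big[(\beta^\delta_i)^+(g^i-\overline g^i+\delta)-(\beta^\delta_i)^-(g^i-\underline g^i-\delta)\Big]\quad\mathcal{P}\text{-q.s.},
\]
so the right-hand side lies in $\mathcal{C}:=\{H\bullet S_T+\alpha(u-\underline u):(H,\alpha)\in\mathcal{H}\times\mathbb{R}^r\}-\mathcal{L}_+^0$, which is $\mathcal{P}$-q.s.\ closed by \cite[Theorem 2.2]{BN13}. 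Taking $\delta=\delta_n\downarrow 0$ along a subsequence with $\beta^{\delta_n}\to\beta$, $\|\beta\|_1=1$, the right-hand side converges pointwise, hence $\mathcal{P}$-q.s., to $-\sum_{i>r}\big[\beta_i^+(g^i-\overline g^i)-\beta_i^-(g^i-\underline g^i)\big]$, which therefore also lies in $\mathcal{C}$. Thus there exist $(H,\alpha)\in\mathcal{H}\times\mathbb{R}^r$ with
\[
H\bullet S_T+\alpha(u-\underline u)+\sum_{i>r}\big[\beta_i^+(g^i-\overline g^i)-\beta_i^-(g^i-\underline g^i)\big]\ \geq\ 0\quad\mathcal{P}\text{-q.s.}
\]
The left-hand side is precisely the payoff, under the original bid-ask prices, of the semi-static strategy $(H,(\alpha,\beta))$, so NA$(\mathcal{P})$ forces it to equal $0$ $\mathcal{P}$-q.s.

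Because $\|\beta\|_1=1$, after relabeling the spread options we may assume $\beta_e\neq 0$. Using the identity $\beta_i^+(g^i-\overline g^i)-\beta_i^-(g^i-\underline g^i)=\beta_i g^i-\beta_i^+\overline g^i+\beta_i^-\underline g^i$ for each $i>r$, the last relation can be solved for $g^e$: dividing by $\beta_e$ and collecting the constant terms gives
\[
g^e=x+H'\bullet S_T+\sum_{j\neq e}c_j g^j\quad\mathcal{P}\text{-q.s.}
\]
for some $x\in\mathbb{R}$, $H'\in\mathcal{H}$ and $c_j\in\mathbb{R}$, which contradicts the non-redundancy of the spread option $g^e$. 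Hence NA$^r(\mathcal{P})$ holds.

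I expect the main obstacle to be the limiting step: one must normalize by the $\ell^1$-norm of the \emph{spread} positions only, so that the limit $\beta$ is nonzero (and so the non-redundancy hypothesis applies), while the dynamic component $H^\delta$ and the spread-free positions $\alpha^\delta$ — which are not controlled a priori — are absorbed by the $\mathcal{P}$-q.s.\ closedness of $\mathcal{C}$; one also has to check carefully that, despite the $\pm\delta$ perturbations, the relevant convergence is pointwise and hence holds $\mathcal{P}$-q.s.
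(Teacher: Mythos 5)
Your proof is correct, but it follows a genuinely different route from the paper's. The paper proves the lemma by induction on the number of options with non-zero spread: it shrinks the spread of one newly added option $f$ at a time, uses the induction hypothesis to force the position in $f$ to be non-zero (normalizing $|h_f^n|=1$), and then distinguishes whether the remaining static positions $(h_u^n,h_v^n)$ stay bounded, applying \cite[Theorem 2.2]{BN13} in either case. You instead shrink all spreads simultaneously by $\delta$, observe that NA$(\mathcal{P})$ forces the spread positions to be non-zero, normalize by the $\ell^1$-norm of the \emph{spread} positions alone, and absorb the uncontrolled dynamic part $H^\delta$ and the spread-free positions $\alpha^\delta$ through the $\mathcal{P}$-q.s.\ closedness of $\mathcal{C}$ --- in effect transplanting the technique used in Part~1 of the proof of Theorem~\ref{thm:main} (closedness of $\mathcal{C}_g$) to this lemma. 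Your route avoids both the induction and the bounded/unbounded case split, and it has a substantive advantage: your normalization guarantees that the limiting static position $\beta$ loads on some option with non-zero spread, so the final replication identity contradicts the non-redundancy of a \emph{spread} option, which is exactly what the lemma assumes; in the paper's unbounded case the normalization is by $\|(h_u^n,h_v^n)\|$ and the limit could in principle charge only the spread-free options $u$, to which the non-redundancy hypothesis does not apply, so your argument is, if anything, tighter on this point. The only step you should make explicit is that the closedness of $\mathcal{C}$ from \cite[Theorem 2.2]{BN13} is available here because NA$(\mathcal{P})$ for the full market implies NA$(\mathcal{P})$ for the sub-market consisting of the stocks and the spread-free options (take zero positions in the spread options); with that remark added, the argument is complete.
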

\begin{proof}
Let $g=(g^1,\dotso,g^{r+s})$, where $u:=(g^1,\dotso,g^r)$ consists of the hedging options without bid-ask spread, i.e, $\underline g^i=\overline g^i$ for $i=1,\dotso,r$, and $(g^{r+1},\dotso,g^{r+s})$ consists of those with bid-ask spread, i.e., $\underline g^i<\overline g^i$ for $i=r+1,\dotso,r+s$. We shall prove the result by induction on $s$. Obviously the result holds when $s=0$. Suppose the result holds for $s=k\geq 0$. Then  for $s=k+1$, denote $v:=(g^{r+1}\dotso,g^{r+k})$, $\underline v:=(\underline g^{r+1},\dotso,\underline g^{r+k})$ and $\overline v:=(\overline g^{r+1},\dotso,\overline g^{r+k})$. Denote $f:=g^{r+k+1}$.
%

By the induction hypothesis, there exists $[\underline v',\overline v']\subset (\underline v,\overline v)$ be such that NA$(\mathcal{P})$ holds in the market with stocks, options $u$ and options $v$ with any bid-ask prices $b$ and $a$ satisfying $[\underline v',\overline v']\subset[b,a]\subset (\underline v,\overline v)$. Let $\underline v_n\in(\underline v,\underline v')$, $\overline v_n\in(\overline v',\overline v)$, $\underline f_n>\underline f$ and $\underline f_n<\overline f$, such that $\underline v_n\searrow\underline v$, $\overline v_n\nearrow\overline v$, $\underline f_n\searrow\underline f$ and $\overline f_n\nearrow \overline f$. We shall show that for some $n$, NA$(\mathcal{P})$ holds with stocks, options $u$, options $v$ with bid-ask prices $\underline v_n$ and $\overline v_n$, option $f$ with bid-ask prices $\underline f_n$ and $\overline f_n$. We will show it by contradiction.

If not, then for each $n$, there exists $(H^n,h_u^{n},h_v^{n},h_f^{n})\in\mathcal{H}\times\mathbb{R}^r\times\mathbb{R}^k\times\mathbb{R}$ such that
\begin{equation}\label{21}
H^n\bullet S_T+h_u^{n}(u-\underline u)+(h_v^{n})^+(v-\overline v_n)-(h_v^{n})^-(v-\underline v_n)+(h_f^{n})^+(f-\overline f_n)-(h_f^{n})^-(f-\underline f_n)\geq 0,\ \mathcal{P}-q.s.,
\end{equation}
and the strict inequality for the above holds with positive probability under some $P_n\in\mathcal{P}$. Hence $h_f^{n}\neq 0$. By a normalization, we can assume that $|h_f^{n}|=1$. By extracting a subsequence, we can w.l.o.g. assume that $h_f^{n}=-1$ (the argument when assuming $h_f^{n}=1$ is similar). If $(h_u^{n},h_v^{n})_n$ is not bounded, then w.l.o.g. we assume that $0<c^n:=||(h_u^{n},h_v^{n})||\rightarrow\infty$. By \eqref{21} we have that
$$\frac{H^n}{c^n}\bullet S_T+\frac{h_u^{n}}{c^n}(u-\underline u)+\frac{(h_v^{n})^+}{c^n}(v-\overline v_n)-\frac{(h_v^{n})^-}{c^n}(v-\underline v_n)-\frac{1}{c^n}(f-\underline f_n)\geq 0,\ \mathcal{P}-q.s..$$
By \cite[Theorem 2.2]{BN13}, there exists $H\in\mathcal{H}$, such that
$$H\bullet S_T+h_u(u-\underline u)+h_v^+(v-\overline v)-h_v^-(v-\underline v)\geq 0,\ \mathcal{P}-q.s.,$$
where $(h_u,h_v)$ is the limit of $(h_u^{n}/c^n,h_u^{n}/c^n)$ along some subsequence with $||(h_u,h_v)||=1$. NA$(\mathcal{P})$ implies that 
\begin{equation}\label{22}H\bullet S_T+h_u(u-\underline u)+h_v^+(v-\overline v)-h_v^-(v-\underline v)=0,\ \mathcal{P}-q.s..
\end{equation}
Since $(h_u,h_v)\neq 0$, \eqref{22} contradicts the non-redundancy assumption of $(u,v)$.

Therefore, $(h_u^{n},h_v^{n})_n$ is bounded, and w.l.o.g. assume it has the limit $(\hat h_u,\hat h_v)$. Then applying \cite[Theorem 2.2]{BN13} in \eqref{21}, there exists $\hat H\in\mathcal{H}$ such that
$$\hat H\bullet S_T+\hat h_u(u-\underline u)+\hat h_v^+(v-\overline v)-\hat h_v^-(v-\underline v)-(f-\underline f)\geq 0,\ \mathcal{P}-q.s..$$
NA$(\mathcal{P})$ implies that
$$\hat H\bullet S_T+\hat h_u(u-\underline u)+\hat h_v^+(v-\overline v)-\hat h_v^-(v-\underline v)-(f-\underline f)=0,\ \mathcal{P}-q.s.,$$
which contradicts the non-redundancy assumption of $f$.
\end{proof}

We have the following FTAP and super-hedging result in terms of NA$(\mathcal{P})$ instead of NA$^r(\mathcal{P})$, when we additionally assume the non-redundancy of $g$.

\begin{thm}\label{cor}
Suppose all hedging options with non-zero spread are non-redundant. Let $\varphi\geq 1$ be a random variable such that $|g^i|\leq \varphi$ $\forall i=1,\ldots,e$. The following statements hold:
\begin{itemize}
\item[(a')] (Fundamental Theorem of Asset Pricing): The following statements are equivalent
\begin{itemize}
\item[(i)] NA$(\mathcal{P})$ holds.
\item[(ii)] $\forall P\in\mathcal{P}$, $\exists Q\in\mathcal{Q}_\varphi$ such that $P\ll Q$.
\end{itemize}
\item[(b')] (Super-hedging) Suppose NA$(\mathcal{P})$ holds. Let $f:\Omega\rightarrow \mathbb{R}$ be Borel measurable such that $|f|\leq\varphi$. The super-hedging price is given by
\begin{equation}\label{eq}
\pi(f)=\sup_{Q\in\mathcal{Q}_\varphi}E^Q[f]\in(-\infty,\infty],
\end{equation}
and there exists $(H,h)\in\mathcal{H}\times\mathbb{R}^e$ such that $\pi(f)+H\bullet S_T+h^+(g-\overline{g})-h^-(g-\underline{g})\geq f \ \mathcal{P}\text{-q.s.}$.
\end{itemize}\end{thm}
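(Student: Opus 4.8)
The plan is to reduce Theorem~\ref{cor} to Theorem~\ref{thm:main} via Lemma~\ref{le1}, which already carries the main burden: under the hypothesis that all spread options are non-redundant, NA$(\mathcal{P})$ is equivalent to NA$^r(\mathcal{P})$. Consequently, whenever NA$(\mathcal{P})$ is in force we may invoke all conclusions of Theorem~\ref{thm:main}, and the only genuinely new thing to check is the ``easy'' implication (ii)$\Rightarrow$(i) of (a'), which does not use non-redundancy at all.

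For part (b'): assuming NA$(\mathcal{P})$, Lemma~\ref{le1} gives NA$^r(\mathcal{P})$, so Theorem~\ref{thm:main}(b) yields $\pi(f)=\sup_{Q\in\mathcal{Q}^s_\varphi}E^Q[f]=\sup_{Q\in\mathcal{Q}_\varphi}E^Q[f]\in(-\infty,\infty]$ together with an optimal semi-static hedging strategy; in particular the (weaker) assertion of (b') holds. For the implication (i)$\Rightarrow$(ii) in (a'): again NA$(\mathcal{P})\Rightarrow$ NA$^r(\mathcal{P})$ by Lemma~\ref{le1}, and Theorem~\ref{thm:main}(a) then produces some $[\underline g',\overline g']\subseteq ri[\underline g,\overline g]$ such that every $P\in\mathcal{P}$ is dominated by some $Q\in\mathcal{Q}^{[\underline g',\overline g']}_\varphi$; since $\mathcal{Q}^{[\underline g',\overline g']}_\varphi\subseteq\mathcal{Q}_\varphi$, this is exactly (ii).

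It remains to prove (ii)$\Rightarrow$(i) in (a'), the standard easy half of the FTAP. Argue by contradiction: suppose (ii) holds but NA$(\mathcal{P})$ fails, so there is $(H,h)\in\mathcal{H}\times\mathbb{R}^e$ with $W:=H\bullet S_T+h^+(g-\overline g)-h^-(g-\underline g)\geq 0$ $\mathcal{P}$-q.s. and $P'(W>0)>0$ for some $P'\in\mathcal{P}$. By (ii) choose $Q\in\mathcal{Q}_\varphi$ with $P'\ll Q$. Since $Q\lll\mathcal{P}$, the measure $Q$ charges no $\mathcal{P}$-polar set, so $W\geq 0$ $Q$-a.s., and $Q(W>0)>0$ because $P'\ll Q$; moreover $W=H\bullet S_T+(\text{a term bounded in absolute value by }c\varphi)$ for a constant $c$, so $H\bullet S_T\geq -c\varphi$ with $\varphi\in L^1(Q)$. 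Hence $H\bullet S$ is a $Q$-supermartingale (a stochastic integral against the $Q$-martingale $S$ that is bounded below by a $Q$-integrable random variable), which gives $E^Q[H\bullet S_T]\leq 0$; combined with $E^Q[g]\in[\underline g,\overline g]$ this yields $E^Q[W]\leq h^+(E^Q[g]-\overline g)-h^-(E^Q[g]-\underline g)\leq 0$, contradicting $E^Q[W]>0$. The only technical point here is the supermartingale/integrability step, and it is routine and identical to the corresponding argument in \cite{BN13}; there is no substantive obstacle in the proof of Theorem~\ref{cor} beyond what is already handled by Lemma~\ref{le1} and Theorem~\ref{thm:main}.
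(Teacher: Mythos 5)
Your proposal is correct and follows exactly the paper's route: reduce (i)$\Rightarrow$(ii) and (b') to Theorem~\ref{thm:main} via Lemma~\ref{le1}, with (ii)$\Rightarrow$(i) handled separately. The only difference is that the paper dismisses (ii)$\Rightarrow$(i) as trivial, whereas you write out the standard supermartingale argument explicitly, which is fine and consistent with the corresponding step in \cite{BN13}.
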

\begin{proof}
(a')(ii)$\implies$(a')(i) is trivial. Now if (a')(i) holds, then by Lemma \ref{le1}, (a)(i) in Theorem \ref{thm:main} holds, which implies (a)(ii) holds, and thus (a')(ii) holds. Finally, (b') is implied by Lemma \ref{le1} and Theorem \ref{thm:main}(b).
\end{proof}

\begin{remark}Theorem~\ref{cor} generalizes the results of \cite{BN13} to the case when the option prices are quoted with bid-ask spreads. When $\mathcal{P}$ is the set of all probability measures and the given options are all call options written on the dynamically traded assets, a result with option bid-ask spreads similar to Theorem~\ref{cor}-(a) had been obtained by \cite{Cousot20073377}; see Proposition 4.1 therein, although the non-redundancy condition did not actually appear. (The objective of  \cite{Cousot20073377} was to obtain relationships between the option prices which are necessary and sufficient to rule out semi-static arbitrage and the proof relies on determining the correct set of relationships and then identifying a martingale measure.) 

However, the no arbitrage concept used in  \cite{Cousot20073377} is different: the author there assumes that there is no \emph{weak arbitrage} in the sense of \cite{MR2281789}; see also \cite{MAFI:MAFI12021} and \cite{Schachermayer13}.\footnote{The no-arbitrage assumption in \cite{Schachermayer13} is the model independent arbitrage of \cite{MR2281789}. However that paper rules out the model dependent arbitrage by assuming that a superlinearly growing option can be  bought for static hedging.} (Recall that a market is said to have weak arbitrage if for any given model (probability measure) there is an arbitrage strategy which is an arbitrage in the classical sense.) The arbitrage concept used here and in  \cite{BN13} is weaker, in that we say that a non-negative wealth ($\mathcal{P}$-q.s.) is an arbitrage even if there is a single $P$ under which the wealth process is a classical arbitrage. Hence our no-arbitrage condition is stronger than the one used in \cite{Cousot20073377}. But what we get out from a stronger assumption is the existence of a martingale measure $Q \in \mathcal{Q}_{\varphi}$ for each $P \in \mathcal{P}$. Whereas \cite{Cousot20073377} only guarantees the existence of only one martingale measure which prices the hedging options correctly.
\end{remark}

\bibliographystyle{plain}
\bibliography{FTAP_bib}{}

\end{document}